\documentclass[11pt]{article}
\pdfoutput=1


\usepackage{fullpage}

\usepackage[T1]{fontenc}
\usepackage[utf8]{inputenc}
\usepackage{cite}
\usepackage{amsmath}
\usepackage{subfigure}
\usepackage{pgfplots}
\usepackage{xspace}
\usepackage{booktabs} 
\usepackage[activate={true,nocompatibility},final,tracking=true,kerning=true,spacing=true]{microtype}
\microtypecontext{spacing=nonfrench}

\pgfplotsset{compat=newest}

\newcommand{\sq}{\hbox{\rlap{$\sqcap$}$\sqcup$}}
\newcommand{\qed}{\hspace*{\fill}\sq}
\newenvironment{proof}[1][]{\noindent {\textbf{Proof#1.}}\ }{\qed\par\vskip 4mm\par}
\newdimen\endofsize\endofsize=0.5em

\newtheorem{theorem}{Theorem}
\newtheorem{lemma}[theorem]{Lemma}


\newcommand{\card}[1]{\left\vert{#1}\right\vert}

\newcommand{\Oh}[1]{\mathcal{O}\!\left( #1\right)}
\newcommand{\oh}[1]{\mathrm{o}\!\left( #1\right)}
\newcommand{\Ohsmall}[1]{\mathcal{O}(#1)}

\newcommand{\vertex}{v}
\newcommand{\vertexA}{u}
\newcommand{\vertexB}{v}
\newcommand{\vertexC}{w}
\newcommand{\Alphabet}{\Sigma}
\newcommand{\Tree}{T}
\newcommand{\Toptree}{\mathcal{T}}
\newcommand{\TopDAG}{\mathcal{TD}}
\newcommand{\NumN}{n_{\Tree}}
\newcommand{\NumNTD}{n_{\TopDAG}}
\newcommand{\NumS}{\sigma}
\newcommand{\J}{j}
\newcommand{\thresh}{t}
\newcommand{\SubS}[1]{s(#1)}
\newcommand{\Height}[1]{h(#1)}
\newcommand{\SB}{x}

\newenvironment{absolutelynopagebreak}
  {\par\nobreak\vfil\penalty0\vfilneg
   \vtop\bgroup}
  {\par\xdef\tpd{\the\prevdepth}\egroup
   \prevdepth=\tpd}

\usepackage{hyperref}

\hypersetup{
  pdfauthor={Lorenz Hübschle-Schneider and Rajeev Raman},
  pdftitle={Tree Compression with Top Trees Revisited},
  pdfsubject={Tree Compression, Top Trees},
  urlcolor=blue,
}

\begin{document}

\title{Tree Compression with Top Trees Revisited}

\author{Lorenz Hübschle-Schneider \\ \texttt{\href{mailto:huebschle@kit.edu}{huebschle@kit.edu}}\\Institute of Theoretical Informatics\\Karlsruhe Institute of Technology\\Germany \and Rajeev Raman \\ \texttt{\href{mailto:r.raman@leicester.ac.uk}{r.raman@leicester.ac.uk}} \\Department of Computer Science\\University of Leicester\\United Kingdom}

\date{}

\maketitle
\setcounter{footnote}{0}

\begin{abstract}
We revisit tree compression with top trees
(Bille et al.~\cite{TopTrees2013}),
and present several improvements to the compressor and
its analysis. By significantly reducing the amount of information stored and
guiding the compression step using a RePair-inspired heuristic, we obtain a fast
compressor achieving good compression ratios, addressing
an open problem posed by~\cite{TopTrees2013}.
We show how, with relatively small overhead, the
compressed file can be converted into an in-memory representation that
supports basic navigation operations in worst-case logarithmic time without
decompression. We also show a much improved worst-case bound on the size
of the output of top-tree compression (answering an open question
posed in a talk on this algorithm by Weimann in 2012).

\end{abstract}

\section{Introduction}\label{s:intro}

Labelled trees are one of the most frequently used nonlinear data structures in
computer science, appearing in the form of suffix trees, XML files, tries, and
dictionaries, to name but a few prominent examples. These trees are
frequently very large, prompting a need for compression for on-disk storage.
Ideally, one would like specialized tree compressors to certainly get
much better compression ratios than general-purpose compressors such as
\texttt{bzip2} or \texttt{gzip}, but also for the compression to
be fast; as Ferragina et al. note~\cite[p4:25]{FerraginaSuccinct2009}.
\footnote{Their remark is about XML tree compressors but applies to
general ones as well.}

In fact, it is also frequently necessary to hold such trees in main memory and
perform complex navigations to query or mine them. However, common in-memory
representations use pointer data structures that have significant
overhead---e.g. for XML files,
standard DOM\footnote{\emph{Document Object Model}, a common interface for
interacting with XML documents} representations are typically~8-16~times larger than
the (already large) XML file~\cite{SiXMLWhitePaper,SpaceEffDOM2007}. To
process such large trees, it is essential to
have compressed in-memory representations that
\emph{directly} support rapid navigation and queries, without partial or
full decompression.

Before we describe previous work, and compare it with
ours, we give some definitions.
A \emph{labelled tree} is an ordered, rooted tree whose nodes have labels from
an alphabet $\Sigma$ of size $\card{\Sigma}=\NumS$. We consider the following
kinds of redundancy in the tree structure. \emph{Subtree repeats} are repeated
occurrences of \emph{rooted subtrees}, i.e.\ a node and all of its descendants,
identical in structure and labels. \emph{Tree pattern repeats} or
\emph{internal repeats} are repeated occurrences of \emph{tree patterns}, i.e.\
connected subgraphs of the tree, identical in structure as well as labels.


\subsection{Previous Work}\label{s:intro:prev}

Nearly all existing compression methods for labelled trees follow one of three
major approaches: \emph{transform-based compressors} that transform the tree's
structure, e.g. into its minimal DAG, \emph{grammar-based compressors} that
compute a tree grammar, and--although not compression--\emph{succinct
representations} of the tree.

\paragraph*{Transform-Based Compressors.} 
We can replace subtree repeats by edges to a single shared instance of the
subtree and obtain a smaller Directed Acyclic Graph (DAG) representing the tree.
The smallest of these, called the \textit{minimal DAG}, is unique and can be computed in linear
time~\cite{MinDag1980}. Navigation and path queries can be supported in
logarithmic time~\cite{PQXML2003,RandomAccessGrammars2011}. While its size can
be exponentially smaller than the tree, no compression is achieved in the worst
case (a chain of nodes with the same label is its own minimal DAG, even though
it is highly repetitive). Since DAG minimization only compresses repeated
subtrees, it misses many internal repeats, and is thus insufficient in many
cases.

Bille et al.\ introduced tree compression with top
trees~\cite{TopTrees2013}, which this paper builds upon.
Their method exploits both repeated subtrees and tree structure repeats, and
can compress exponentially better than DAG minimization.
They give a $\log_\NumS^{0.19}{n}$
worst-case compression ratio for a tree of size $n$ labelled from an alphabet of
size $\NumS$ for their algorithm. They show that navigation and a
number of other operations are supported in $O(\log n)$ time directly
on the compressed representation.
However, they do not give any practical evaluation, and indeed state as an
open question whether top-tree compression has practical value.

\paragraph*{Tree Grammars.} 
A popular approach to exploit the redundancy of tree patterns is to represent the tree using a
formal grammar that generates the input tree, generalizing grammar compression
from strings to trees~\cite{BPLEX2004, TreeCompression2004, GrammarComprExp2006,
GrammarCompression2005, TreeRePair2013, ApproxSmallest2013}.
These can be exponentially smaller than the minimal
DAG~\cite{GrammarComprExp2006}. Since it is NP-Hard to compute the smallest
grammar~\cite{SmallestGrammar2005}, efficient heuristics are required.

One very simple yet efficient heuristic method is RePair~\cite{RePair2000}. A
string compressor, it can be applied to a parentheses bitstring representation
of the tree. The output grammars produced by RePair can support a variety
of navigational operations and random access, in time logarithmic in the input
tree size, after additional processing~\cite{RandomAccessGrammars2011}. These
methods, however, appear to require significant engineering effort
before their practicality can be assessed.

TreeRePair~\cite{TreeRePair2013} is a generalization of RePair
from strings to trees. It achieves the best grammar compression ratios currently
known. However, navigating TreeRePair's grammars in sublinear time with respect
to their depth, which can be linear in their size~\cite{TopTrees2013}, is an
open problem.  For relatively small documents (where the output of
TreeRePair fits in cache), the navigation
speed for simple tree traversals
is about 5 times slower than succinct representations~\cite{TreeRePair2013}.

Several other popular grammar compressors exist for trees. Among them,
BPLEX~\cite{BPLEX2004,GrammarCompression2005} is probably best-known,
but is much slower than TreeRePair.
The~\textsf{TtoG} algorithm is the first to achieve a good theoretical
approximation ratio~\cite{ApproxSmallest2013}, but has not been evaluated in
practice.

\paragraph*{Succinct Representations.} Another approach is to represent the
tree using near-optimal space without applying compression methods to its
structure, a technique called \textit{succinct data structures}. Unlabelled
trees can be represented using~$2n+\oh{n}$ bits~\cite{JacobsonSuccinct1989} and
support queries in constant time~\cite{MunroSuccinct2001}.
There are a few $n \log{\NumS} + \Oh{n}$ bit-representations for labelled
trees, most notably that by Ferragina et al.~\cite{FerraginaSuccinct2009},
which also yields a compressor, XBZip. While XBZip has good performance
on XML files \emph{in their entirety}, including text, attributes etc.,
evidence suggests that it does not beat TreeRePair on pure labelled trees.
As the authors admit, it is also slow.

\begin{figure}[tb]
\center
\mbox{
	\subfigure[]{
		\label{fig:mergetypes:a}
		\begin{tikzpicture}
			\draw[black,fill=black] (1,2.5) circle (3pt);
			\draw[black] (1,1.5) circle (3pt);
			\draw[black,fill=black] (1,0.5) circle (3pt);
			\draw (1,2) ellipse (8pt and 20pt);
			\draw (1,1) ellipse (8pt and 20pt);
		\end{tikzpicture}
	}
	\quad
	\subfigure[]{
		\label{fig:mergetypes:b}
		\begin{tikzpicture}
			\draw[black,fill=black] (1,2.5) circle (3pt);
			\draw[black] (1,1.5) circle (3pt);
			\draw (1,2) ellipse (8pt and 20pt);
			\draw (1,1) ellipse (8pt and 20pt);
		\end{tikzpicture}
	}
	\quad
	\subfigure[]{
		\label{fig:mergetypes:c}
		\begin{tikzpicture}
			\draw[black, fill=black] (1,1.5) circle (3pt);
			\draw[black, fill=black] (0.46,0.65) circle (3pt);
			\draw[rotate around={30:(1,1.5)}] (1.05,1) ellipse (8pt and 20pt);
			\draw[rotate around={-30:(1,1.5)}] (0.95,1) ellipse (8pt and 20pt);
		\end{tikzpicture}
	}
	\quad
	\subfigure[]{
		\label{fig:mergetypes:d}
		\begin{tikzpicture}
			\draw[black, fill=black] (1,1.5) circle (3pt);
			\draw[black, fill=black] (1.54,0.65) circle (3pt);
			\draw[rotate around={30:(1,1.5)}] (1.05,1) ellipse (8pt and 20pt);
			\draw[rotate around={-30:(1,1.5)}] (0.95,1) ellipse (8pt and 20pt);
		\end{tikzpicture}
	}
	\quad
	\subfigure[]{
		\label{fig:mergetypes:e}
		\begin{tikzpicture}
			\draw[black, fill=black] (1,1.5) circle (3pt);
			\draw[rotate around={30:(1,1.5)}] (1.05,1) ellipse (8pt and 20pt);
			\draw[rotate around={-30:(1,1.5)}] (0.95,1) ellipse (8pt and 20pt);
		\end{tikzpicture}
	}
}

\caption{Five kinds of cluster merges in top trees. Solid nodes are boundary
nodes, hollow ones are boundary nodes that become internal. Source of this
graphic and more details:~\cite[Section 2.1]{TopTrees2013}.}\label{fig:mergetypes}
\end{figure}
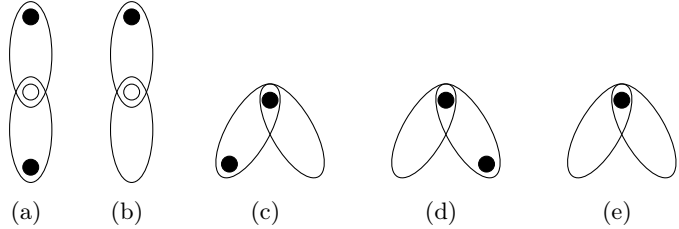


\subsection{Our Results}\label{s:intro:results}

Our primary aim in this paper is to address the question
of Bille et al.~\cite{TopTrees2013} regarding the
practicality of the top tree approach, but we make some
theoretical contributions as well.  We first give some 
terminology and notation.

A \emph{top tree}~\cite{TopTrees2005} is a hierarchical decomposition of a
tree into \emph{clusters}, which represent subgraphs of the original tree.
Leaf clusters correspond to single edges, and inner clusters represent the union
of the subgraphs represented by their two children. Clusters are formed in one
of five ways, called \emph{merge types}, shown in Figure~\ref{fig:mergetypes}. A
cluster can have one or two \textit{boundary nodes}, a top- and optionally a
bottom boundary node, where other clusters can be attached by merging. A top
tree's minimal DAG is referred to as a \textit{top DAG}. For further details on
the fundamentals of tree compression with top trees, refer to~\cite{TopTrees2013}. Throughout this paper, let~$\Tree$ be any ordered, labelled tree with~$\NumN$
nodes, and let~$\Alphabet$ denote the label alphabet 
with~$\NumS := \card{\Alphabet}$. Let~$\Toptree$ be the top tree and~$\TopDAG$ the top DAG
corresponding to~$\Tree$, and~$\NumNTD$ the total size (nodes plus edges) of~$\TopDAG$. We assume a standard word RAM model with 
logarithmic word size, and measure space
complexity in terms of the number of words used.  
Then:
\begin{theorem}\label{thm:dagsize}
The size of the top DAG is $\NumNTD = 
\Oh{\frac{\NumN}{\log_\NumS{\NumN}} \cdot \log\log_\NumS{\NumN}}$.
\end{theorem}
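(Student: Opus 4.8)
The plan is to count the nodes of $\TopDAG$ directly and then observe that, since $\TopDAG$ is binary (each cluster is either a single edge or the merge of exactly two children), its number of edges is within a factor two of its number of nodes, so $\NumNTD=\Oh{\#\text{nodes}}$. As $\TopDAG$ is the minimal DAG of $\Toptree$, its nodes correspond bijectively to the \emph{distinct} clusters the construction produces, i.e.\ to the distinct rooted subtrees of $\Toptree$ --- ordered binary trees whose leaves carry one label from $\Alphabet$ and whose nodes carry one of five merge types. So it suffices to bound the number of distinct clusters. I will use two facts about the top tree of Bille et al., recalled from~\cite{TopTrees2013}: (i) $\Toptree$ is an ordered binary tree with $\Theta(\NumN)$ leaves, namely the edges of $\Tree$; and (ii) the construction runs in $\Oh{\log\NumN}$ parallel rounds, the number of live clusters shrinking by a constant factor per round, so a cluster created in round $r$ has at most $2^r$ edges, and in particular $\Toptree$ has height $\Oh{\log\NumN}$. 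For a node $v$ write $C_v$ for its cluster and $\Toptree_v$ for the subtree of $\Toptree$ rooted at $v$.

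First I would fix a threshold $\thresh=\Theta(\log_\NumS{\NumN})$ and split the distinct clusters into \emph{small} ones ($|C_v|\le\thresh$) and \emph{large} ones. A distinct small cluster is an ordered binary tree on at most $\thresh$ leaves with an $\Alphabet$-label per leaf and a merge type per node, so there are at most $\NumS^{\Oh{\thresh}}$ of them, which is $\oh{\NumN/\log_\NumS{\NumN}}$ once the constant hidden in $\thresh$ is small enough (if $\log_\NumS{\NumN}=\Oh{1}$ the theorem is vacuous, since $\NumNTD=\Oh{\NumN}$ always). For the large clusters I would prove the key estimate that $\Toptree$ contains $\Oh{(\NumN/s)\log s}$ clusters with at least $s$ edges; taking $s=\thresh$ bounds the number of distinct large clusters by $\Oh{(\NumN/\thresh)\log\thresh}$, and adding the two bounds yields $\NumNTD=\Oh{(\NumN/\thresh)\log\thresh}=\Oh{\frac{\NumN}{\log_\NumS{\NumN}}\log\log_\NumS{\NumN}}$. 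Note that the $\log\log_\NumS{\NumN}$ in the statement is exactly the $\log\thresh$ produced at this step.

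The heart of the proof, and the step I expect to be the main obstacle, is the key estimate. Group the clusters of size $\ge s$ into size classes $[m,2m)$ with $m=s,2s,4s,\dots$ For a fixed class, the clusters with no child of size $\ge m$ are pairwise incomparable in $\Toptree$ (a proper descendant of size $\ge m$ would lie below a child of size $\ge m$), hence have pairwise edge-disjoint clusters, hence number at most $\NumN/m$; and every other cluster of size in $[m,2m)$ lies on a unique downward path (a \emph{chain}) above one of these, because $|C_v|<2m$ forces $v$ to have at most one child of size $\ge m$. Such a chain is a root-path of $\Toptree_v$ for its topmost cluster $v$ (which has fewer than $2m$ edges), so its length is at most the height of $\Toptree_v$. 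Bounding that height by the global $\Oh{\log\NumN}$ gives only $\Oh{(\NumN/m)\log\NumN}$ per class and a trivial final bound; to get $\Oh{(\NumN/m)\log m}$ per class --- and hence, summing the geometric series over all classes $\ge s$, the desired $\Oh{(\NumN/s)\log s}$ --- one needs the subtrees $\Toptree_v$ of small clusters to be near-balanced, of height $\Oh{\log|C_v|}$. This does not follow immediately from fact (ii) (a cluster could in principle accrete edges one round at a time), so the real work is either a sharper analysis of Bille et al.'s construction showing each cluster's top tree is near-balanced, or a mild modification of the construction that enforces this; everything else --- the geometric sums, the choice of constant in $\thresh$, the node-versus-edge factor --- is routine bookkeeping.
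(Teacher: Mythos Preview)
Your approach is essentially the paper's: the same small/large split at threshold $\thresh=\Theta(\log_\NumS\NumN)$, the same counting of small shapes, the same dyadic size classes with the ``at most one child in the same class'' chain argument, and the same identification of the local height bound as the crux. (The paper phrases the count as ``light nodes with heavy parent'' rather than ``large clusters'', but since $\Toptree$ is a full binary tree these differ by one.)

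The single step you leave open---$\Height{\Toptree_v}=\Oh{\log\SubS{\vertex}}$---is exactly the paper's Lemma~\ref{lem:logheight}, and no modification of the construction is needed. The observation that closes it is this: the subtree $\Toptree_v$ is not an arbitrary binary tree but is itself the top tree of a tree pattern of $\Tree$, namely the cluster $C_v$. The round-by-round shrinkage analysis of Bille et al.\ is local (whether two adjacent clusters merge in a given round depends only on their immediate neighbourhood), so the same $8/7$ factor applies when we restrict attention to the clusters inside $C_v$. Hence the number of rounds until $C_v$ collapses to a single cluster---which is the height of $\Toptree_v$---is at most $\log_{8/7}\SubS{\vertex}$. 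With this lemma in hand, your chain-length bound becomes $\Oh{\log m}$ per class and the rest of your bookkeeping goes through verbatim.
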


\noindent This is only a factor of $\Oh{\log \log _\NumS{\NumN}}$
away from the information-theoretic lower bound, and greatly 
improves the bound of $\Oh{n/\log_\NumS^{0.19}{n}}$ obtained by
Bille et al. and answers an open question posed in a talk by
Weimann.  

Next, we show that if only
basic navigation is to be performed, the amount of information that
needs to be stored can be greatly reduced, relative
to the original representation~\cite{TopTrees2013},
without affecting the asymptotic running time.
\begin{theorem}\label{thm:nav}
We can support navigation with the operations \textsf{Is
Leaf}, \textsf{Is Last Child}, \textsf{First Child}, \textsf{Next Sibling}, and
\textsf{Parent} in $\Oh{\log{\NumN}}$ time, full decompression in time
$\Oh{\NumN}$
on a representation of size $\Oh{\NumNTD}$ storing only the top
DAG's structure, the merge types of inner nodes (an integer from $[1..5]$), 
and leaves' labels.
\end{theorem}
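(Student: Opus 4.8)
The plan is to retrace the $\Oh{\log \NumN}$-time navigation of Bille et al.~\cite{TopTrees2013} and to check, operation by operation, that every branching decision it makes can be resolved using only the information we keep: the shape of $\TopDAG$, the merge type in $[1..5]$ of each inner node, and the label of each leaf. Two preliminary remarks set the stage. First, every node of $\Tree$ other than the root is the lower endpoint of a unique edge, so attaching to each leaf of $\TopDAG$ the label of that edge's lower endpoint records all node labels of $\Tree$ (the root's label is kept as $\Oh{1}$ extra data); thus leaf labels are exactly the label information needed, both for navigation and for decompression. Second, the top-tree construction of~\cite{TopTrees2013} is balanced, so $\Toptree$ has height $\Oh{\log\NumN}$; since $\TopDAG$ is the minimal DAG of $\Toptree$, a root-to-cluster path in $\Toptree$ is a walk of $\Oh{\log\NumN}$ edges in $\TopDAG$, and we use such a walk --- recording at each step which child is taken --- as a \emph{cursor}. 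A node $\vertex$ of $\Tree$ is represented by the cursor reaching the leaf cluster that is $\vertex$'s parent edge (plus one bit recording whether $\vertex$ is a boundary node of that cluster or a dangling leaf of it), with the root of $\Tree$ handled separately; every other cluster consulted by the operations --- in particular the largest cluster having $\vertex$ as top boundary node, which exists exactly when $\vertex$ is not a leaf of $\Tree$ --- is reachable from the cursor by a further walk of $\Oh{\log\NumN}$ clusters. The cursor occupies $\Oh{\log\NumN}$ words.

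The heart of the argument is that at an inner cluster $C$ with children $A$ and $B$, the merge type of $C$ alone fixes the local combinatorics: how many boundary nodes each of $A$ and $B$ has, which boundary node of $A$ is identified with which boundary node of $B$, which identified node ceases to be a boundary node in $C$, and the left-to-right order in $\Tree$ of the subgraphs expanded from $A$ and from $B$. Hence, knowing which boundary slot of $C$ we currently occupy, we know for each child whether the current node of $\Tree$ is a boundary node of it and which one --- exactly what is needed to push a cursor down into a child or pop it up to the parent while keeping track of which node of $\Tree$ (the current node, its parent, the node just internalized) sits in each boundary slot. Given this, each operation is a bounded walk on the cursor. \textsf{Is Leaf}$(\vertex)$ tests whether any cluster has $\vertex$ as a top boundary node, which is read off while locating that cluster. \textsf{First Child}$(\vertex)$ descends from the largest cluster with $\vertex$ on top, at each step into the one of the two children whose expansion contains the first edge leaving $\vertex$ (determined by the merge type), until it reaches a leaf cluster $(\vertex,c)$, and returns $c$. \textsf{Parent}$(\vertex)$ reads the upper endpoint of the parent-edge leaf cluster recorded in the cursor. \textsf{Next Sibling}$(\vertex)$ and \textsf{Is Last Child}$(\vertex)$ walk up from that parent edge $(p,\vertex)$: at every ancestor $p$ is still a top boundary node, and the merge type says whether the cluster being merged in lies to the right of $\vertex$ in $p$'s child order and carries a child edge of $p$; if so we descend into it to $p$'s leftmost child there and return that node, and if we first reach the merge that removes $p$ from the boundary, $\vertex$ is the last child. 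Each walk visits $\Oh{\log\NumN}$ clusters, and the returned node's cursor is rebuilt within the same bound.

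Full decompression expands the root cluster of $\TopDAG$ recursively, following $\TopDAG$ edges with repetition: a leaf cluster expands to its single edge, labelling the lower endpoint from the stored leaf labels, and an inner cluster expands to the gluing of its two children's expansions prescribed by its stored merge type. This visits each of the $\Theta(\NumN)$ clusters of $\Toptree$ once, doing $\Oh{1}$ work per cluster and per emitted edge, so it runs in $\Oh{\NumN}$ time.

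The main obstacle is the case analysis underpinning the second paragraph: proving that the bare merge type --- with no cluster sizes and no explicit boundary-node identities stored --- always suffices to choose the direction of descent in \textsf{First Child} and of ascent in \textsf{Next Sibling}/\textsf{Is Last Child}, to detect the merges that change the number of boundary nodes (so that cursors over leaf clusters and root clusters are interpreted correctly), and to keep the record of which node of $\Tree$ occupies each boundary slot consistent. Once the five merge types of Figure~\ref{fig:mergetypes} are handled uniformly by this slot-tracking, correctness is routine and the $\Oh{\log\NumN}$ running time follows from the $\Oh{\log\NumN}$ height of $\Toptree$.
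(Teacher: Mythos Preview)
Your proposal is correct and follows the same core approach as the paper: represent the current node by a root-to-leaf path in $\TopDAG$ (your ``cursor'', the paper's ``DAG stack''), and implement each query by walking up or down that path, branching solely on the merge type of the inner cluster at hand. Your case analyses for \textsf{Is Leaf}, \textsf{Is Last Child}, \textsf{First Child}, and \textsf{Next Sibling} coincide with the paper's almost verbatim.

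The one substantive difference is \textsf{Parent}. The paper keeps a second stack --- the ``tree stack'' --- holding a saved copy of the DAG stack for every ancestor in $\Tree$; \textsf{First Child} pushes onto it and \textsf{Parent} simply pops, so \textsf{Parent} is $\Oh{1}$ but the auxiliary state can reach $\Oh{\NumN\log\NumN}$ in the worst case (which the paper acknowledges). You instead keep only the single $\Oh{\log\NumN}$-word cursor and rebuild the parent's cursor on demand, which stays within the stated $\Oh{\log\NumN}$ time and avoids the large auxiliary state entirely. That trade-off is worth pointing out. Your one-line description of \textsf{Parent}, however, is too terse to stand on its own: the leaf cluster $(p,v)$ stores only $v$'s label, so you cannot literally ``read the upper endpoint'', and in any case you need a cursor for $p$, not just its label. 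Spell out the walk you are relying on: ascend from $(p,v)$ until you arrive, as the right (lower) child, at the type-(a) or~-(b) merge whose shared node is $p$; its left (upper) child has $p$ as bottom boundary, and descending there along the child that retains the bottom boundary reaches the leaf cluster for the edge above $p$. With that made explicit, your argument is complete.
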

We believe this approach will have
low overhead and fast running times in practice for in-memory
navigation without decompression, and sketch how one would approach
an implementation.

Furthermore, we introduce the notion of \textit{combiners} that determine the
order in which clusters are merged during top tree construction. 
Combiners aim to improve the compressibility of the top tree, resulting in a
smaller top DAG. We present one such combiner that
applies the basic idea of RePair~\cite{RePair2000} to top tree compression,
prioritizing merges that produce subtree repeats in the top tree, in
Section~\ref{s:combiner}.  We give a relatively naive encoding of the
top tree, primarily using Huffman codes, and evaluate its compression
performance.  Although the output of the modified top tree compressor
is up to 50\,\% larger than the state-of-the-art TreeRePair, it is about
six times faster.  We believe that the compression gap can be narrowed
while maintaining the speed gap.

\section{Top Trees Revisited}\label{s:contrib}

\subsection{DAG Design Decisions}\label{s:contrib:dag}

The original top tree compression paper~\cite{TopTrees2013} did not try to
minimize the amount of information that actually needs to be stored. Instead,
the focus was on implementing a wide variety of navigation operations in
logarithmic time while maintaining~$\Oh{\NumNTD}$ space \textit{asymptotically}.
Here, we reduce the amount of additional information stored about the clusters
to obtain good compression ratios.

Instead of storing the labels of both endpoints of a leaf cluster's
corresponding edge, we store only the child's label, not the parent's. 
In addition to reducing
storage requirements, this reduces the top tree's alphabet size from~$\NumS^2+5$
to~$\NumS+5$, as each cluster has either one label or a merge type. This
increases the likelihood of identical subtrees in the top tree, improving
compression. Note that this change implies that there is exactly one leaf
cluster in the top DAG for each distinct label in the input. To code the root,
we perform a merge of type~(a) (see Section~\ref{s:intro:results} and
Figure~\ref{fig:mergetypes}) between a dummy edge leading to the root and the
last remaining edge after all other merges have completed.

With these modifications, we reduce the amount of information stored with the
clusters to the bare minimum required for decompression, i.e.\ leaf clusters'
labels and inner clusters' merge types.

Lastly, we speed up compression by directly constructing the top DAG
during the merge process. We initialize it with all distinct leaves, and
maintain a mapping from cluster IDs to node IDs in~$\TopDAG$, as well as a hash
map mapping DAG nodes to their node IDs. When two edges are merged into a new
cluster, we look up its children in the DAG and only need to add a new node
to~$\TopDAG$ if this is its first occurrence. Otherwise, we simply update the
cluster-to-node mapping.

\subsection{Navigation}\label{s:contrib:nav}

We now explain how to navigate the top DAG with our reduced information set. We
support full decompression in time $\Oh{\NumN}$, as well as operations to move
around the tree in time proportional to the height of the top DAG,
i.e.~$\Oh{\log{\NumN}}$. These are: determining whether the current
node is a leaf or its parent's last child, and moving to its first child, next
sibling, and parent. Accessing a node's label is possible in constant time given
its node number in the top DAG.

\par\bigskip
\begin{proof}[ (Theorem~\ref{thm:nav})]
As a node in a DAG can be the child of any number of other nodes, it does not
have a unique parent. Thus, to allow us to move back to a node's parent in the DAG, we need to maintain
a stack of parent cluster IDs along with a bit to indicate whether we descended into
the left or right child. We refer to this as the \textit{DAG stack}, and update
it whenever we move around in~$\TopDAG$ with the operations below. Similarly,
we also maintain a \textit{tree stack} containing the DAG stack of each ancestor
of the current node in the (original) tree.

\paragraph*{Decompression:} We traverse the top DAG in pre-order, undoing the merge
operations to reconstruct the tree. We begin with~$\NumN$ isolated nodes,
and then add back the edges and labels as we traverse the top DAG. As this
requires constant time per cluster and edge, we can decompress the top DAG
in~$\Oh{\NumN}$~time.

\paragraph*{Label Access:} Since only leaf clusters store labels, and these are
coded as the very first clusters in the top DAG (cf. Section~\ref{s:contrib:%
encoding}), their node indices come before all other nodes'. Therefore,
a leaf's label index~$i$ is its node number in the top DAG. We access the label
array in the position following the~$(i-1)$th null byte, which we can find with
a~$\mathsf{Select}_{0}(i-1)$ operation, and decode the label string until we
reach another null byte or the end.

\paragraph*{Is Leaf:} A node is a leaf iff it is no cluster's top boundary node.
Moving up through the DAG stack, if we reach a cluster of
type~(a) or~(b) from the \textit{left} child, the node is not a leaf (the
left child of such a cluster is the \textit{upper} one in
Figure~\ref{fig:mergetypes}). If, at any point before encountering such a
cluster, we exhaust the DAG stack or reach a cluster of type~(b) or~(c) from the right, type~(d) from the
left, or type~(e) from either side, the node is a leaf.
This can again be seen in Figure~\ref{fig:mergetypes}.

\paragraph*{Is Last Child:} We move up the DAG stack until we reach a cluster of
type~(c),~(d), or~(e) from its left child. Upon encountering a cluster of
type~(a) or~(b) from the right, or emptying the DAG stack completely, we abort
as the upward search lead us to the node's parent or exhausted the tree, respectively.

\paragraph*{First Child and Next Sibling:} First, we check whether the node is a
leaf (\textsf{First Child}) or its parent's last child (\textsf{Next Sibling}),
and abort if it is. \textsf{First Child} then pushes a copy of the DAG stack onto the tree
stack. Next, we re-use the upward search performed by the previous check,
removing the elements visited by the search from the DAG stack, up until the
cluster with which the search ended. We descend into its right child and keep
following the left child until we reach a leaf.

\paragraph*{Parent:} Since \textsf{First Child} pushes the DAG stack
onto the tree stack, we simply reset the DAG stack to the tree stack's top
element, which is removed.
\end{proof}

We note here that the tree stack could, in theory, grow to a size of~$\Oh{\NumN
\log{\NumN}}$, as the tree can have linear height and the logarithmically sized
DAG stack is pushed onto it in each \textsf{First Child} operation. However, we
argue that due to the low depth of common labelled trees, especially XML files,
this stack will remain small in practice. Even when pessimistically assuming a
\emph{very} large tree with a height of~80 nodes, with a top tree of height~50,
the tree stack will comfortably fit into~32\,kB when using~64-bit node IDs. Our
preliminary experiments confirm this.

To improve the worst-case tree stack size in theory, we can instead keep a log
of movements in the top DAG, which is limited in size to the distance travelled
therein. We expect this to be significantly less than~$\Oh{\NumN \log{\NumN}}$
in expectation.

\subsection{Worst-Case Top DAG size}
\label{contrib:bounds}

Bille et al. show that a tree's top tree has at most~$\Oh{\NumN / \log_{\NumS}
^{0.19}{\NumN}}$ distinct clusters~\cite{TopTrees2013}. This bound, however, is
an artifact of the proof. By modifying the definition of a
\textit{small cluster} in the compression analysis and carefully exploiting the
properties of top trees, we are able to show a new, tighter, bound, which
directly translates to an improvement on the worst-case compression ratio.
Before we can prove Theorem~\ref{thm:dagsize}, we need to show the following
essential lemmata. Let~$\SubS{\vertex}$ be the size of~$\vertex$'s subtree,
and~$p(\vertex)$ denote its parent.

\begin{lemma}\label{lem:logheight}
Let~$\Tree$ be any ordered labelled tree of size~$\NumN$, and let~$\Toptree$ be
its top tree. For any node~$\vertex$ of~$\Toptree$, the height of its subtree is
at most~$\lfloor\log_{8/7}{\SubS{\vertex}}\rfloor$.
\end{lemma}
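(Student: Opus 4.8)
The plan is to prove the bound by induction, following Bille et al.'s argument that a top tree has height $\Oh{\log\NumN}$ but localising it to the subtree rooted at an arbitrary cluster. Write $h(v)$ for the height of the subtree of $\Toptree$ rooted at $v$; since $h(v)$ is an integer, the statement is equivalent to $\SubS{v}\ge(8/7)^{h(v)}$ for every node $v$ of $\Toptree$. The convenient quantity to induct on is the round number of the construction: let $r(v)$ be the round in which $v$ is formed, with $r(v)=0$ for the initial one-edge clusters. Because a cluster takes part in at most one merge per round (the merges in a round form, in effect, a matching), when $v$ is created in round $r(v)$ both of its children $a,b$ already existed, so $r(v)>\max\{r(a),r(b)\}$ and hence $h(v)\le r(v)$ by a trivial induction. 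It therefore suffices to prove the single inequality $\SubS{v}\ge(8/7)^{r(v)}$ for every cluster $v$.

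This inequality, proved by induction on $r(v)$, is where the content lies. For the inductive step, let $v$ be formed in round $i$ by merging $a$ and $b$ with, say, $r(a)\ge r(b)$; the induction hypothesis gives $\SubS{a}\ge(8/7)^{r(a)}$, and one has to recover the remaining factor up to $(8/7)^i$ from $b$ together with the rounds $r(a)+1,\dots,i-1$ in which $a$ sat unmerged. The lever is that Bille et al.'s merging is greedily maximal within each round: a cluster can fail to be merged in a round only when every cluster it could legally be merged with is itself being merged in the same round. Combined with a case analysis over the five merge types of Figure~\ref{fig:mergetypes} --- which pins down how the top/bottom-boundary configuration of a cluster, and hence the set of clusters it is allowed to merge with, changes from round to round --- this bounds both how long $a$ can stagnate and how light $b$ can be, and the arithmetic yields the factor $8/7$ per round. (No single merge need enlarge the larger child by a factor $8/7$ --- think of a large cluster absorbing a one-edge cluster --- so the $8/7$ is only recovered once amortised over a round, or a few consecutive rounds; this is exactly why one gets $\log_{8/7}$ rather than $\log_2$.)

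I expect the main obstacle to be making this accounting \emph{localise}. Bille et al. only need it for the whole tree, whereas here it must hold for the connected region of $\Tree$ that an arbitrary cluster $v$ represents. The delicate point is to check that, once we restrict attention to the clusters lying inside that region, greedy maximality and the merge-type analysis still go through --- i.e.\ that the region, with its one or two boundary nodes, behaves with respect to the construction just like a stand-alone cluster --- so that the bound obtained depends only on $\SubS{v}$. With that done, taking $\log_{8/7}$ of $\SubS{v}\ge(8/7)^{h(v)}$ and flooring gives the lemma, and it is precisely the per-cluster form (rather than a single global height bound) that will let the estimate be summed over a carefully chosen antichain of clusters in the proof of Theorem~\ref{thm:dagsize}.
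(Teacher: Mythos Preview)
Your plan and the paper's proof agree on the \emph{destination}---localise Bille et al.'s $8/7$-per-round argument to the cluster $C_v$---but the paper gets there by a much shorter route and avoids the induction on round numbers entirely. The paper simply observes that the subtree of $\Toptree$ rooted at $v$ \emph{is} a top tree: the connected components of the partially built $\Toptree$ are exactly the subtrees of the final top tree, and each such component records the round-by-round merging of the edges of a tree pattern of $\Tree$ (connected, with at most two boundary nodes). Since clusters contained in $C_v$ can only ever merge with other clusters contained in $C_v$, Bille et al.'s statement ``each iteration reduces the number of clusters by a factor $\ge 8/7$'' applies to this sub-instance verbatim, and the height bound follows immediately from the number of rounds needed to collapse the $(\SubS{v}+1)/2$ leaf clusters to one. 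No per-cluster induction, no amortisation, no case analysis over merge types.

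Your route through the auxiliary inequality $\SubS{v}\ge (8/7)^{r(v)}$ is where I would push back. This is \emph{strictly stronger} than what you need (since $h(v)\le r(v)$), and it is not clear it holds per cluster: a small cluster can sit unmerged for several rounds before being absorbed, so when $v$ is finally created one may have $r(v)\gg h(v)$ with $\SubS{v}$ tracking $h(v)$ rather than $r(v)$. Your proposed fix---amortising the missing growth over the idle rounds of the larger child via greedy maximality and a merge-type case split---is precisely the kind of global counting that Bille et al.\ already packaged into the single statement ``$\ge 1/8$ of clusters disappear per round''. Re-deriving it per cluster is possible in spirit, but the accounting you sketch is vague at exactly the point where it must be precise (how idle rounds of $a$ are paid for by $b$), and you would essentially be reproving the cited lemma inside an induction. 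The paper's insight is that the ``delicate localisation'' you anticipate is free: restrict to $C_v$, note it is a valid input to the same algorithm, and cite the global bound.
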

\begin{proof}
Consider the incremental construction process of a top tree~$\Toptree$. During
the merge process, the algorithm builds up a tree by joining clusters into
larger clusters. We start with a forest of~$\NumN+1$ nodes, each representing an
edge of~$\Tree$. Every merge operation joins two clusters and thus reduces the number
of connected components in $\Toptree$ by one. These connected components are
subtrees of the final top tree. We can thus think of them as the top trees for
tree patterns of the input tree.

Note that a subtree of the top tree is not the top tree of a rooted subtree for
two reasons. For one, it might represent some, but not all, siblings of a node.
This is due to horizontal merges operating on pairs of edges to consecutive
siblings. Thus, a cluster could, for example, represent a node and the subtrees
of the first two of its five children. Secondly, if the cluster has a bottom
boundary node~$\vertexC$ (drawn as a filled node at the bottom in Figure~\ref{fig:mergetypes}),
the subtree of~$\Tree$ that is rooted at~$\vertexC$ is \textit{not} contained in the
cluster. Thus, the cluster does not correspond to a subtree of~$\Tree$, but
rather a tree pattern, i.e.\ a connected subgraph.

Therefore, a subtree of a top tree is the top tree of a tree pattern of~$\Tree$,
and the same bounds apply to its height. As each iteration of merges in the top
tree construction reduces the number of strongly connected components by a factor of~$c \geq
8/7$~\cite{TopTrees2013}, there are at most~$\lceil\log_{8/7}{\NumN}\rceil$
iterations, each of which increases the height of the top tree by exactly 1.
Being a full binary tree with~$\NumN+\nobreak1$ leaves, the top tree has~$2 \NumN$
edges. Thus, the height of any top tree of size~$n$ is bounded
by~$\lceil\log_{8/7}{\frac{n}{2}}\rceil <\nobreak \lfloor \log_{8/7}{n} \rfloor \approx\nobreak 5.2\log{n}$.
By the above, this also applies to subtrees of top trees. \end{proof}

\begin{lemma}\label{lem:smallclusters}%
Let~$\Tree$ be any ordered labelled tree of size~$\NumN$, let~$\Toptree$ be its
top tree, and~$\thresh$ be an integer. Then~$\Toptree$ contains at
most~$\Oh{(\NumN / \thresh) \cdot \log{\thresh}}$ nodes~$\vertex$ so
that~$\SubS{\vertex} \leq \thresh$ and~$\SubS{p(\vertex)} > \thresh$.
\end{lemma}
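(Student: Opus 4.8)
I want to count the nodes $\vertex$ in $\Toptree$ that are "threshold-crossing" in the sense that $\SubS{\vertex}\le\thresh$ but $\SubS{p(\vertex)}>\thresh$. Call these the \emph{marked} nodes. The key structural fact is that no marked node is an ancestor of another marked node (if $\vertex$ is marked and lies strictly below another marked node $\vertexA$, then $\SubS{\vertexA}\ge\SubS{p(\vertex)}>\thresh$, contradicting $\SubS{\vertexA}\le\thresh$ unless $\vertexA=\vertex$). Hence the marked nodes form an antichain in $\Toptree$, and — since $\Toptree$ is a binary tree — the subtrees rooted at marked nodes are pairwise disjoint. Each such subtree has size $>1$ only if it is not a single leaf, but in any case it contains at least one edge's worth of the tree $\Tree$; more usefully, I would like to say each marked subtree ``uses up'' roughly $\thresh$ of the budget of $\NumN$ nodes of $\Tree$, which would immediately give a bound of $\Oh{\NumN/\thresh}$ marked nodes. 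That naive argument fails, however: a marked node can have $\SubS{\vertex}$ as small as $1$, so the $\thresh$-sized budget is being spent by $p(\vertex)$, not by $\vertex$ itself, and a single parent could in principle have many marked descendants.

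**Reconciling the over-counting.** The fix is to charge marked nodes to their parents and bound how many marked nodes share a parent. Because $\Toptree$ is binary, a parent $p(\vertex)$ has at most two children, so at most two marked nodes share the same parent; thus it suffices to bound the number of \emph{distinct parents} of marked nodes, i.e. the number of nodes $\vertexA$ with $\SubS{\vertexA}>\thresh$ having at least one child $\vertex$ with $\SubS{\vertex}\le\thresh$. These parents need not be an antichain, so I instead stratify by subtree size: for $i = 0, 1, \dots, \lceil\log_{8/7}\NumN\rceil$, look at parents $\vertexA$ with $\thresh < \SubS{\vertexA} \le 2\thresh \cdot (8/7)^{\,?}$ — more precisely, I would bucket the parents by the value $\lfloor\log(\SubS{\vertexA})\rfloor$ (or powers of $8/7$). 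Within a bucket where $\SubS{\vertexA}\in(\thresh, 2^{k+1}]$ for the relevant range, the parents whose subtrees are disjoint contribute $\Oh{\NumN/2^k}$, but a parent may also be an ancestor of another parent in the same bucket. Here I invoke Lemma~\ref{lem:logheight}: the height of any subtree of $\Toptree$ is $\Oh{\log \SubS{\cdot}}$, so a chain of parents all with subtree size in $(\thresh, 2\thresh]$ has length $\Oh{\log\thresh}$. Combining, the number of parents in the ``just above $\thresh$'' range is $\Oh{(\NumN/\thresh)\log\thresh}$, and the remaining ranges with $\SubS{\vertexA}\approx 2^k\thresh$ contribute a geometrically decreasing $\Oh{\NumN/(2^k\thresh)}$ each, summing to $\Oh{\NumN/\thresh}$, hence the total is $\Oh{(\NumN/\thresh)\log\thresh}$ as claimed, and multiplying by $2$ for the two children per parent changes nothing.

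**What I'd actually write carefully.** The cleanest version: let $B=\{\vertexA : \SubS{\vertexA}>\thresh,\ \exists\text{ child }\vertex\text{ of }\vertexA\text{ with }\SubS{\vertex}\le\thresh\}$. Every marked node has its parent in $B$ and each element of $B$ is the parent of at most $2$ marked nodes, so $|\{\text{marked}\}|\le 2|B|$. To bound $|B|$, note every $\vertexA\in B$ has a child of size $\le\thresh$, hence $\SubS{\vertexA}\le 1 + \thresh + (\text{size of other child})\le 2\thresh+1$ when \emph{both} children are $\le\thresh$, but in general one child can be huge — so instead: for each $\vertexA\in B$ pick the ``small side'' child $c(\vertexA)$ with $\SubS{c(\vertexA)}\le\thresh$; the subtrees rooted at the $c(\vertexA)$ are not disjoint across all of $B$ only because of nesting along root-to-leaf paths. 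Consider any root-to-leaf path $P$ in $\Toptree$; the nodes of $B$ on $P$ all have subtree size $>\thresh$ and — crucially — they lie within a ``window'' of the path: once the subtree size drops to $\le\thresh$ it never rises again (subtree sizes are monotone decreasing down a path). So the $B$-nodes on $P$ occupy a contiguous stretch of $P$ starting at some node of size $>\thresh$ and this stretch has length at most the height of that node's subtree, which by Lemma~\ref{lem:logheight} is $\Oh{\log(\text{that size})}$. Now do a standard heavy-path / geometric decomposition of $\Toptree$ by subtree sizes at the threshold scale: there are $\Oh{\NumN/\thresh}$ maximal subtrees of size in $(\thresh, O(\thresh)]$ pairwise arranged so their roots carry disjoint ``$\thresh$-mass,'' each contributing a vertical $B$-stretch of length $\Oh{\log\thresh}$; larger scales contribute geometrically less. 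This yields $|B| = \Oh{(\NumN/\thresh)\log\thresh}$ and completes the proof.

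\medskip
\noindent\textbf{Main obstacle.} The only real subtlety is the over-counting I flagged: marked nodes are an antichain, but their \emph{parents} are not, and a naive ``each marked subtree spends $\thresh$ mass'' argument is wrong because marked subtrees can be tiny. The whole weight of the proof lies in controlling how parents (or small-side children) can nest along root-to-leaf paths, and the $\log\thresh$ factor in the bound is exactly the price of that nesting, supplied by the logarithmic-height Lemma~\ref{lem:logheight}. Getting the size-bucketing and the geometric sum over scales stated cleanly — rather than the path-by-path argument, which risks double-counting paths through a shared node — is the bookkeeping one has to be careful with.
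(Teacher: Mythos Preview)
Your proposal is correct and follows essentially the same route as the paper. Both arguments reduce to bounding the set $B$ of heavy nodes having a light child, stratify $B$ by subtree size (the paper uses classes $[2^i,2^{i+1})$, you use buckets $(2^k\thresh,2^{k+1}\thresh]$, which is the same up to reindexing), invoke Lemma~\ref{lem:logheight} to bound the length of a chain of same-class nodes, and finish with the geometric sum $\sum_k (\NumN/2^k\thresh)\cdot O(k+\log\thresh)=O((\NumN/\thresh)\log\thresh)$.

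Two small points of difference. First, the paper peels off the case of a heavy parent with \emph{two} light children (then the children together carry $\ge\thresh$ leaves, giving $\le 2\NumN/\thresh$ such marked nodes directly) before running the class argument only on heavy parents with one light and one heavy child; you fold everything into the bucketing, which also works since those parents sit at the bottom of the lowest bucket. Second, in your ``reconciling'' paragraph you write that higher buckets contribute $O(\NumN/(2^k\thresh))$ each, silently dropping the chain-length factor $O(k+\log\thresh)$; this is imprecise as stated, but the sum $\sum_k (k+\log\thresh)/2^k=O(\log\thresh)$ still gives the claimed bound, so it is only a presentational slip.
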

\begin{proof}
We will call any node~$\vertex$ of the top tree a \textit{light} node
iff~$\SubS{\vertex} \leq \thresh$, otherwise we refer to it as \textit{heavy}.
With this terminology, we are looking to bound the number of light nodes whose
parent is heavy.

As~$\Toptree$ is a full binary tree, there are four cases to distinguish. We are
not interested in the children of light nodes, nor are we interested in heavy
nodes with two heavy children. This leaves us with two interesting cases:
\begin{enumerate}
	\item A heavy node~$\vertexA$ with two light children~$\vertexB$
	and~$\vertexC$. Then,~$\SubS{\vertexB} + \SubS{\vertexC} \geq \thresh$.
	Thus, there are at most~$2\NumN/\thresh = \Oh{\NumN / \thresh}$ light nodes
	with a heavy parent and a light sibling.

	\item A heavy node with one light and one heavy child. We will consider
	this case in the remainder of the proof.
\end{enumerate}

Consider any heavy node~$\vertex$. We say that~$\vertex$ is in
\textit{class~$i$} iff~$\SubS{\vertex} \in \left[2^i,2^{i+1}\!-1\right]$. Observe
that only classes~$i \geq \lfloor\log_2{\thresh}\rfloor$ can contain heavy nodes,
and that the highest non-empty class is~$\lfloor\log_2{\NumN}\rfloor$. Let a
\textit{top class~$i$ node} be a node in class~$i$ whose parent is in class~$j >
i$, and a \textit{bottom class~$i$ node} one for which both children are in
classes lower than~$i$. We now make two propositions:

\begin{description}
	\item[Proposition 1] A node~$\vertexA$ of class~$i$ can have at most one
	child in class~$i$.%
  \\\emph{Proof:} Assume both children~$\vertexB, \vertexC$ of $\vertexA$ are in
  class~$i$. Then, the subtree size of~$\vertexA$ is~$\SubS{\vertexA} = 1 +
  \SubS{\vertexB} + \SubS{\vertexC} \geq 1 + 2^i + 2^i > 2^{i+1}$, and thus by
  definition~$\vertexA$ is not in class~$i$.

	\item[Proposition 2] Let~$\vertexB$ be a top class~$i$ node. There are at
	most~$\Oh{i}$ light nodes in the subtree of~$\vertexB$ that are children of
	class~$i$ nodes.%
	\\\textit{Proof:} By Proposition 1, there exists exactly one path of
	class~$i$ nodes in the subtree of~$\vertexB$. This path begins at~$\vertexB$
	and ends at the bottom class~$i$ node of the subtree of~$\vertexB$, which we
	refer to as~$\vertexC$. There are no other
	class~$i$ nodes in the subtree of~$\vertexB$. Being a full binary tree, the
	height of~$\vertexC$'s subtree fulfills~$\Height{\vertexC} \geq\nobreak
	\log_2{\SubS{\vertexC}} \geq\nobreak \log_2{2^i} = i$. We now use
	Lemma~\ref{lem:logheight} to obtain an upper bound on~$\Height{\vertexB}$ of $\Height{\vertexB} \leq
	\lfloor\log_{8/7}{\SubS{\vertexB}}\rfloor \leq \frac{i+1}{\log_2{8/7}} \approx 5.2 \cdot
	(i+1)$. Thus, the path from the top class~$i$ node to the bottom class~$i$
	node has a length of~$l \leq \Height{\vertexB} - \Height{\vertexC} =
	\Oh{i}$. Each node on the path can have at most one light child by Proposition~1. Thus, there
	are at most~$\Oh{i}$ light nodes that are children of class~$i$ nodes in the
	subtree of a top class~$i$ node.
\end{description}

Combining Proposition 2 with the observation that the number of top class~$i$
nodes is clearly at most~$\NumN / 2^i$, we obtain a bound on the number of
class~$i$ nodes with one heavy and one light child of~$\NumN / 2^i \cdot
\Oh{i}$. We then sum over all classes containing heavy nodes to obtain the total
number of heavy nodes with one light child,

\begin{equation*}
	\sum_{i=\lfloor\log_2{\thresh}\rfloor}^{\lfloor\log_2{\NumN}\rfloor}{\frac{
	\NumN}{2^i}\cdot\Oh{i}} = \Oh{\frac{\NumN}{\thresh} \cdot \log{\thresh}}
\end{equation*}

Thus, there are at most~$\Oh{\NumN / \thresh \cdot \log{\thresh}} + 2\NumN /
\thresh = \Oh{\NumN / \thresh \cdot \log{\thresh}}$ light nodes whose parent is
heavy. This concludes the proof.

\end{proof}

\begin{proof}[ (Theorem~\ref{thm:dagsize})]
We define a \textit{small cluster} as one whose subtree contains at
most~$2^\J+1$ nodes and set~$\J=\log_2{(0.5 \log_{4\NumS}{\NumN})}$. We call a
small cluster \textit{maximal} if its parent's subtree exceeds the size limit of
a small cluster. A cluster that is not small is called a \textit{large} cluster.
Note that this is a special case of our distinction between light and heavy
nodes in the proof of Lemma~\ref{lem:smallclusters}.

As each of the~$\NumN$ leaves of the top tree is contained in exactly one
maximal small cluster, and the top tree is a full binary tree, there is exactly one
large cluster less than there are maximal small clusters. Thus, it suffices to
show that there are at most~$\Oh{(\NumN \cdot \log{\log_\NumS{\NumN}) /
\log_\NumS{\NumN}}}$ maximal small clusters, and that the total number of
distinct small clusters does not exceed said bound.

Recall that each inner node of the top tree is labelled with one of the five merge types, and
that each leaf stores the label of its edge's child node, as described in
Section~\ref{s:contrib:dag}. Therefore, the top tree is labelled with an
alphabet of size~$\NumS+5 = \Oh{\NumS}$.

To bound the total number of distinct small clusters, we consider the number of
distinct labelled trees of size at most~$\SB$, which is~$\Ohsmall{(4\NumS)^{\SB+1}}$, and
can be rewritten as~$\Ohsmall{\NumS^2\sqrt{\NumN}}$ by setting~$\SB =\nobreak 2^\J+1$~\cite{TopTrees2013}.
If~$\NumS<\NumN^{1/8}$, this further reduces to~$\Ohsmall{\NumN^{3/4}}$. Otherwise, the
theorem holds trivially as $\log_\NumS{\NumN}=\Oh{1}$.

We now bound the number of maximal small clusters with
Lemma~\ref{lem:smallclusters} by choosing the threshold $\thresh =\nobreak 2^\J+\nobreak1 =\nobreak 0.5
\log_{4\NumS}{\NumN} + 1 = \Oh{\log_{\NumS}{\NumN}}$. As a maximal small cluster
is a light node whose parent is heavy, we can use Lemma~\ref{lem:smallclusters} to
bound the number of maximal small clusters by~$\Oh{(\NumN \cdot \log{\thresh})
/ \thresh} = \Oh{(\NumN \cdot \log{\log_\NumS{\NumN}}) / \log_\NumS{\NumN}}$.
This concludes the proof.
\end{proof}


\subsection{Encoding}\label{s:contrib:encoding}

\newcommand{\CT}{core\xspace}

In the top DAG, we need to be able to access a cluster's left and right child,
as well as its merge type for inner clusters or the child node's label for
the edge that it refers to for leaf clusters.
To realize this interface, we decompose the top DAG into a binary
\textit{\CT} tree and a pointer array. The \CT tree is defined by removing all
incoming edges from each node, except for the one coming from the node with
lowest pre-order number. All other occurrences are replaced by a dummy leaf node
storing the pre-order number of the referenced node. Leaves in the top DAG are
assigned new numbers as label pointers, which are smaller than the IDs of all inner nodes. All references to
leaves, including the dummy nodes, are coded in an array of \textit{pointers},
ordered by the pre-order number of the originating node. Similarly, the inner
nodes' merge types are stored in an array in pre-order. Lastly, the \CT tree itself
can be encoded using two bits per inner node, indicating whether the left and
right children are inner nodes in the \CT tree.

Using this representation, all that is required for efficient navigation is an
entropy coder providing constant-time random access to node pointers and merge
types, and a data structure providing \textsf{rank} and \textsf{select} for the
\CT tree and label strings. All of these building blocks can be treated as black
boxes, and are well-studied and readily available, e.g.~\cite{Succinter2008} and
the excellent SDSL~\cite{SDSL2014} library.

\paragraph{Simple Encoding} To obtain file size results with reasonable effort,
we now describe a very simple encoding that does not lend itself to navigation
as easily. We compress the \CT tree bitstring and merge types using blocked
Huffman coding. The pointer array and null byte-separated concatenated label
string are encoded using a Huffman code. The Huffman trees are coded like the
\CT tree above. The symbols are encoded using a fixed length and
concatenated. Lastly, we store the sizes of the four Huffman code segments as a
file header.

\section{Heuristic Combiners}\label{s:combiner}

As described in the original paper~\cite{TopTrees2013}, the construction of the
top tree \textit{exposes} internal repetitions. However, it does not attempt to
maximize the size or number of identical subtrees in the top tree, i.e.\ its
compressibility. Instead, the merge process sweeps through the tree linearly
from left to right and bottom to top. This is a straight-forward cluster
combining strategy that fulfills all the requirements for constructing a top
tree, but does not attempt to maximize compression performance. We therefore
replace the standard combining strategy with heuristic methods that try to increase
compressibility of the top tree. Here, we present one such combiner that applies
the basic idea of RePair to the horizontal merge step of top tree compression.
(In preliminary experiments, it proved detrimental to apply the heuristic to
vertical merges, and we limit ourselves to the horizontal merge step, but note
that this is not a general restriction on combiners.)

We hash all clusters in the top tree as they are created. The hash value
combines the cluster's label, merge type, and the hashes of its left and right
children if these exist. As the edges in the auxiliary tree correspond to
clusters in the top tree during its construction, we assign the cluster's hashes
to the corresponding edges. Defining a digram as two edges whose clusters can
be merged with one of the five merge types from Figure~\ref{fig:mergetypes}, we
can apply the idea of RePair, identifying the edges by their hash values. In
descending order of digram frequency, we merge all non-overlapping occurrences,
updating the remaining edges' hash values to those of the newly created
clusters.

Since this procedure does not necessarily merge a constant fraction of the edges
in each iteration, we may need to additionally apply the normal horizontal merge
algorithm if too few edges were merged by the heuristic. The constant upon which
this decision is based thus becomes a tuning parameter. Note that we need to
ensure that every edge is merged at most once per iteration.

\section{Evaluation}\label{s:eval}

\begin{table}[t]
\caption{XML corpus used for our experiments. File sizes are given for stripped
documents, i.e.\ after removing whitespace and tags' attributes and contents.}\label{tbl:corpus}
\footnotesize
\centering
\begin{minipage}{.497\linewidth}
\begin{tabular}{l rrr}
\toprule
\textbf{File name} & \textbf{size (MB)} & \textbf{~~~\#\,nodes} & \textbf{height}\\\midrule
 1998statistics &   0.60 &      28\,306 &  6 \\
           dblp & 338.87 & 20\,925\,865 &  6 \\
enwiki-latest-p & 229.78 & 14\,018\,880 &  5 \\
       factor12 & 359.36 & 20\,047\,329 & 12 \\
        factor4 & 119.88 &  6\,688\,651 & 12 \\
      factor4.8 & 143.80 &  8\,023\,477 & 12 \\
        factor7 & 209.68 & 11\,697\,881 & 12 \\
  JST-gene.chr1 &   5.79 &     173\,529 &  7 \\
\bottomrule
\end{tabular}
\end{minipage}
\begin{minipage}{.497\linewidth}
\begin{tabular}{l rrr}
\toprule
\textbf{File name} & \!\textbf{size (MB)} & \textbf{~~~\#\,nodes} & \textbf{height}\\\midrule
  JST-snp.chr1 &  27.31 &     803\,596 &  8 \\
          nasa &   8.43 &     476\,646 &  8 \\
NCBI-gene.chr1 &  35.30 &  1\,065\,787 &  7 \\
      proteins & 365.12 & 21\,305\,818 &  7 \\
     SwissProt &  45.25 &  2\,977\,031 &  5 \\
    treebank-e &  25.92 &  2\,437\,666 & 36 \\
           uwm &   1.30 &      66\,729 &  5 \\
          wiki &  42.29 &  2\,679\,553 &  5 \\
\bottomrule
\end{tabular}
\end{minipage}
\end{table}

We now present an experimental evaluation of top tree compression. In this
section, we demonstrate its qualities as a fast and efficient compressor,
compare it against other compressors, and show the effectiveness of our
RePair-inspired combiner.

\paragraph*{Experimental Setup} 
All algorithms were implemented in C++11 and compiled with the GNU C++ compiler
\texttt{g++} in version 4.9.2 using optimization level~\texttt{fast} and
profile-guided optimizations. The experiments were conducted on a commodity PC
with an Intel Core i7-4790T CPU and 16\,GB of DDR3 RAM, running Debian Linux
from the \texttt{sid} channel. We used \texttt{gzip 1.6-4} and \texttt{bzip2
1.0.6-7} from the standard package repositories. Default compression settings
were used for all compressors, except the \texttt{-9} flag for gzip. All input
and output files were located in main memory using a \texttt{tmpfs} RAM disk to
eliminate I/O delays.

\paragraph*{XML corpus} 
We evaluated the compressor and our heuristic improvement on a corpus of common
XML files~\cite{UWXML,JSNP,WikiXML,Delpratt2009}, listed in
Table~\ref{tbl:corpus}. In our experiments, we give file sizes for our simple
encoding, which represent pessimistic results that can serve as an upper bound
of what to expect from a more optimized encoding. We give these file sizes to
demonstrate that even a simple encoding yields good results with regard to
file size, speed, and ease of navigation (see Section~\ref{s:contrib:nav}).


\begin{figure}[t]
\includegraphics{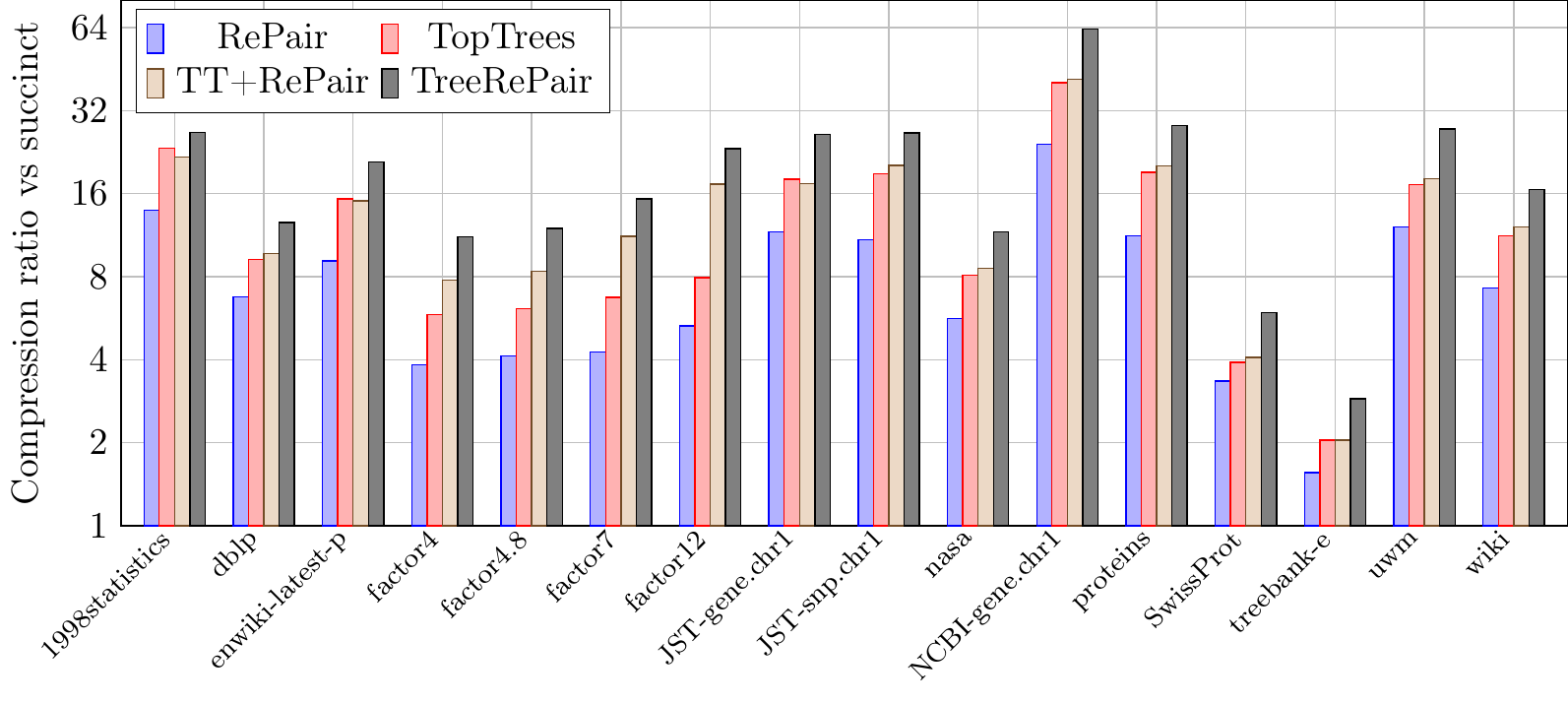}
\caption{Comparison of compression ratios, measured by comparing file sizes
against a succinct encoding of the input file (higher is better)}
\label{fig:vs_succ}
\end{figure}

\begin{figure}[bt]
\begin{tikzpicture}
  \begin{axis}[
    axis x line=box,
    axis y line=box,
    axis x line*=none,
    height=6.5cm,
    width=\columnwidth,
    grid=major,
    ylabel={Output size ratio vs TreeRePair},
    symbolic x coords={1998statistics,dblp,enwiki-latest-p,factor4,factor4.8,factor7,factor12,JST-gene.chr1,JST-snp.chr1,nasa,NCBI-gene.chr1,proteins,SwissProt,treebank-e,uwm,wiki},
    xtick=data,
    x tick label style={rotate=45, anchor=east, font=\fontsize{8pt}{1em}\selectfont},
    major tick length=0pt, 
    ymin=1,
    ymax=2,
    ybar=-2.5pt, 
    restrict y to domain*=0:2.12,
    clip=false,
    visualization depends on=rawy\as\rawy, 
    after end axis/.code={ 
        \draw [ultra thick, white, decoration={snake, amplitude=0.8pt, segment length=7pt}, decorate] (rel axis cs:0,1.05) -- (rel axis cs:1,1.05);
    },
    every node near coord/.style={
        check for zero/.code={
	        \pgfkeys{/pgf/fpu=true}
	        \pgfmathparse{\pgfplotspointmeta-2}f
	        \pgfmathfloatifflags{\pgfmathresult}{-}{
	            \pgfkeys{/tikz/coordinate}
            }{}
            \pgfkeys{/pgf/fpu=false}
        }, check for zero, xshift=-2.5, font=\scriptsize
    },
    nodes near coords={
        \pgfmathprintnumber{\rawy}
    },
    bar width=7pt,
    xmin=1998statistics,
    xmax=wiki,
    enlarge x limits=0.035,
    legend style={column sep=1mm, at={(0.98,1.07)}, anchor=north east},
    legend image code/.code={\draw[#1,draw=#1] (0cm,-0.15cm) rectangle (0.17cm,0.15cm);}
  ]

  \addplot[blue, fill=blue!40] coordinates { (1998statistics,1.13873) (JST-gene.chr1,1.4519) (JST-snp.chr1,1.40815) (NCBI-gene.chr1,1.56029) (SwissProt,1.51475) (dblp,1.35909) (enwiki-latest-p,1.36169) (factor12,2.93229) (factor4,1.90991) (factor4.8,1.9496) (factor7,2.27386) (nasa,1.43556) (proteins,1.4728) (treebank-e,1.41326) (uwm,1.59361) (wiki,1.47404) };
  \addlegendentry{Classic TopTrees}

  \addplot coordinates { (1998statistics,1.22959) (JST-gene.chr1,1.50405) (JST-snp.chr1,1.30914) (NCBI-gene.chr1,1.51995) (SwissProt,1.45281) (dblp,1.29538) (enwiki-latest-p,1.38513) (factor12,1.34351) (factor4,1.43418) (factor4.8,1.42938) (factor7,1.36789) (nasa,1.3607) (proteins,1.39921) (treebank-e,1.40945) (uwm,1.51519) (wiki,1.3633) };
  \addlegendentry{TopTrees + RePair}

  \end{axis}
\end{tikzpicture}
\caption{Comparison of output file sizes produced by top tree compression with
and without the RePair combiner, measured against TreeRePair file sizes (lower
is better)}\label{fig:vs_trp}
\end{figure}

\paragraph*{Results} 
We use a minimum merge ratio of $c=1.26$ for the horizontal merge step using our
RePair-inspired heuristic combiner in all our experiments. This is the result of an
extensive evaluation which showed that values $c \in [1.2, 1.27]$ work
very well on a broad range of XML documents. We observed that values close to 1
can improve compression by up to 10\,\% on some files, while causing a
deterioration by a similar proportion on others. Thus, while better choices
of~$c$ exist for individual files, we chose a fixed value for all files to
provide a fair comparison, similar to the choice of 4 as the maximum rank of the
grammar in TreeRePair~\cite{TreeRePair2013}.

We use a parenthesis bitstring encoding of the input tree as a baseline to
measure compression ratios. The unique label strings are concatenated, separated
by null bytes. Indices into this array are stored as fixed-length numbers
of~$\lceil \log_2{\#\text{labels}} \rceil$ bits.
TreeRePair\footnote{\url{https://code.google.com/p/treerepair}\label{fn:trp}},
which has been carefully optimized to produce very small output files, serves us
as a benchmark. We are, however, reluctant to compare tree compression with top
trees to TreeRePair directly, as our methods have not been optimized to the same
degree.

In Figure~\ref{fig:vs_succ} we give a compression ratios relative to the
succinct encoding. We evaluated our implementation of top tree compression using
the combining strategy from~\cite{TopTrees2013} as well as our RePair-inspired
combiner. We also give the file sizes achieved by TreeRePair and those of RePair
on a parentheses bitstring representation of the input tree and the concatenated
nullbyte-separated label string (note that no deduplication is performed here,
as this is up to the compressor). We represent RePair's grammar production rules
as a sequence of integers with an implicit left-hand side and encode this
representation using a Huffman code. Figure~\ref{fig:vs_succ} shows that top
tree compression consistently outperforms RePair already, but does not achieve
the same level of compression as TreeRePair at this stage. We can also clearly
see the impact of our RePair-inspired heuristic combiner, which improves
compression on nearly all files in our corpus and is studied in more detail in
the next paragraph. Table~\ref{tbl:size} gives the exact numbers for the output
file sizes, supplementing them with results for general-purpose compressors.

\def\discrepancyFactorClassic{2.93}
\def\discrepancyFactorClassicRel{1.93}
\def\discrepancyFactorClassicPercent{193}
\def\discrepancyFileClassic{factor12}

\def\discrepancyFactorRepair{1.52}
\def\discrepancyFactorRepairRel{0.52}
\def\discrepancyFactorRepairPercent{52}
\def\discrepancyFileRepair{NCBI-gene.chr1}

\def\avgDiscrepancyFactorClassic{1.64}
\def\medDiscrepancyFactorClassic{1.47}

\def\avgDiscrepancyFactorRepair{1.39}
\def\medDiscrepancyFactorRepair{1.39}

\def\repairImprovementAvg{10.9}
\def\repairImprovementMed{5.0}

\paragraph*{RePair Combiner.} 
\begin{absolutelynopagebreak}
Figure~\ref{fig:vs_trp} compares the two versions of top tree compression, using
TreeRePair as a benchmark. The RePair combiner's effect is clearly visible,
reducing the maximum disparity in compression relative to TreeRePair from a
file~\discrepancyFactorClassic\xspace times the size (\texttt{%
\discrepancyFileClassic}) to one that is~\discrepancyFactorRepairPercent\,\%
larger (\texttt{\discrepancyFileRepair}). This constitutes nearly a four-fold
decrease in overhead (from \discrepancyFactorClassicRel\xspace to \discrepancyFactorRepairRel).
On average, files are~\avgDiscrepancyFactorRepair\xspace times the size of TreeRePair's, down
from a factor of~\avgDiscrepancyFactorClassic\xspace before. On
our corpus, using the heuristic combiner reduced file sizes
by~\repairImprovementAvg\,\% on average, with the median being
a~\repairImprovementMed\,\% improvement compared to classical top tree
compression. Reduced compression performance was observed on few files only,
particularly smaller ones, while larger files tended to fare better.
\end{absolutelynopagebreak}


\def\trpVsTTclassicAvg{10.5}
\def\trpVsTTclassicMed{9.3}

\def\trpVsTTrepairAvg{6.2}
\def\trpVsTTrepairMed{5.5}

\def\avgSlowdownClassic{2.0}
\def\medSlowdownClassic{1.6}
\def\avgSlowdownRepair{3.3}
\def\medSlowdownRepair{2.6}
\def\avgSlowdownTreeRePair{19.6}
\def\medSlowdownTreeRePair{21.2}
\def\medSlowdownClassicPercent{62}

\def\avgBzipSlowdownClassic{15.4}
\def\medBzipSlowdownClassic{11.9}
\def\avgBzipSlowdownRepair{9.7}
\def\medBzipSlowdownRepair{7.3}
\def\avgBzipSlowdownTreeRePair{1.8}
\def\medBzipSlowdownTreeRePair{1.6}

\paragraph*{Speed.}
Using our RePair-inspired combiner increases the running time of the top tree
creation stage, doubling it on average. Our implementation of classical top tree
compression was~\trpVsTTclassicAvg\xspace times faster than TreeRePair on
average over the corpus from Table~\ref{tbl:corpus}, and
still~\trpVsTTrepairAvg\xspace times faster when using our RePair combiner.
Detailed running time measurements are given in Table~\ref{tbl:runtime}. In
particular, classical top tree compression takes only twice as long as
\texttt{gzip -9} on average, and~\avgSlowdownRepair\xspace times when using our
RePair combiner (TreeRePair:~\medSlowdownTreeRePair). In contrast,
\texttt{bzip2} is~\avgBzipSlowdownClassic\xspace times \emph{slower} than top
tree compression on average, and~\avgBzipSlowdownRepair\xspace times when using
our RePair combiner. This strikingly demonstrates the method's qualities as a
fast compressor.

\paragraph*{Performance on Random Trees.}

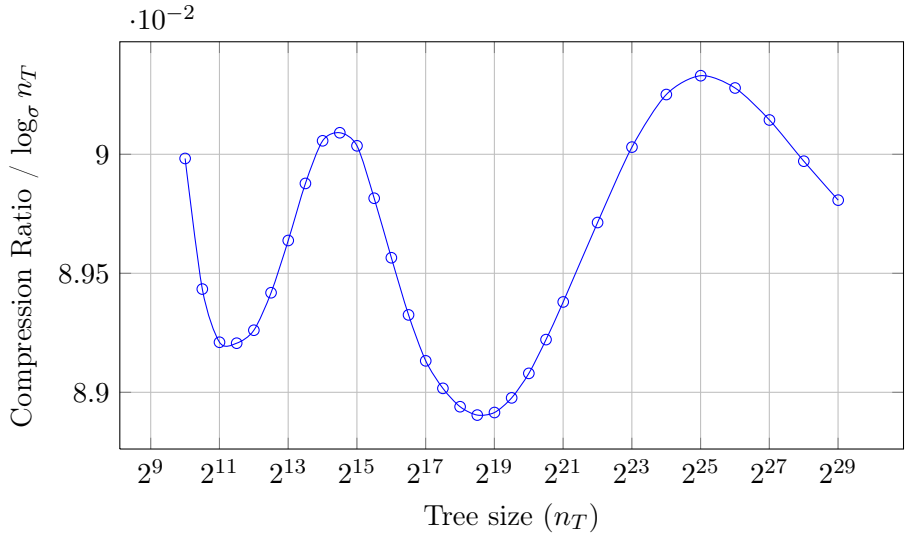
\begin{figure}[bt]
\center
\mbox {
\begin{tikzpicture}
  \begin{semilogxaxis}[
    height=7cm,
    width=12cm,
    grid=major,
    log basis x={2},
    xlabel={Tree size ($\NumN$)},
    ylabel={Compression Ratio / $\log_\NumS{\NumN}$},
    yticklabel style={/pgf/number format/fixed, /pgf/number format/precision=4},
  ]


  \addplot[mark=o,smooth,blue] coordinates { (1024,0.0899824) (1448,0.0894337) (2048,0.0892103) (2896,0.0892063) (4096,0.0892608) (5792,0.0894183) (8192,0.0896378) (11585,0.0898775) (16384,0.0900566) (23170,0.0900904) (32768,0.0900356) (46340,0.0898153) (65536,0.0895653) (92681,0.0893252) (131072,0.0891322) (185363,0.0890166) (262144,0.0889394) (370727,0.0889043) (524288,0.0889151) (741455,0.0889766) (1048576,0.0890796) (1482910,0.0892214) (2097152,0.0893796) (4194304,0.0897133) (8388608,0.0900301) (16777216,0.090251) (33554432,0.0903301) (67108864,0.0902785) (1.34218e+08,0.090144) (2.68435e+08,0.0899709) (5.36871e+08,0.0898073) };

  \end{semilogxaxis}
\end{tikzpicture}
}

\caption{Compression ratio in terms of the average number of edges, divided by
information-theoretical compression ratio bound~$\log_\sigma{\NumN}$,
over~1000 random trees of size~$\NumN=2^{10}$ to~$2^{29}$ with~$\NumS=2$. All
values are contained in a small range, suggesting asymptotically
optimal---$\Oh{\NumN/\log_\NumS{ \NumN}}$---compression.}\label{fig:exp:rand}
\end{figure}

We examine random trees to show that tree compression with top trees is a very
versatile method, and that it does not rely on typical characteristics of XML
files. By definition, random trees do not compress well. Thus, we can use them
to approximate worst-case behaviour.  We generate trees uniformly at random
using a method developed by Atkinson and Sack~\cite{RandomTrees1992}---note
that the method is not limited to binary trees. For the generated trees, we compare the average
number of edges~$\NumNTD$ in the Top DAG to the information-theoretic lower
bound of~$\Omega(\NumN/\log_\sigma{\NumN})$ for a tree of size~$\NumN$. The
results of this are shown in Figure~\ref{fig:exp:rand} for~$\sigma=2$. We can
see that apart from some oscillation, the values are in a very small range
between~$0.0889$ and~$0.0904$ and do not show an overall tendency to grow or
shrink, except for the amplitude of oscillation. This suggests that tree
compression with top trees performs asymptotically optimal on random trees.

The oscillation or zig-zag behaviour exhibited in Figure~\ref{fig:exp:rand}
poses a riddle. The period duration doubles with each quarter of a period,
exhibiting exponential growth, while the wave's amplitude appears to grow by a
constant amount per quarter period. We do not have a definitive explanation
for the causes of this behaviour. However, we can speculate about possible
contributing factors. For one, consider the subtrees that could be shareable
in the top tree. Their height, and therefore number, grows logarithmically
with the height of the top tree, which in turn grows logarithmically with
respect to the input tree's size. Thus, the number of potentially shareable
subtrees grows proportional to~$\log{\log{\NumN}}$. As the potential for DAG
compression grows with the number of shareable subtrees, we would expect a
sawtooth-like pattern in the compression ratios, spiking whenever the
shareable subtree height increases.
This could contribute to the exponential growth in period duration. Further
investigation beyond the scope of this paper would be required to account for
the smoothness and amplitude of the curve.

\section{Conclusions}\label{s:concl}

We have demonstrated that tree compression with top trees is viable, and
suggested several enhancements to improve the degree of compression achieved.
Using the notion of combiners, we demonstrated that significant improvements can
be obtained by carefully choosing the order in which clusters are merged during
top tree creation. We showed that the worst-case compression ratio is
within a~$\log\log_\NumS{\NumN}$ factor of the information-theoretical bound,
and experiments with random trees suggest that actual behaviour is asymptotically
optimal.
Further, we gave efficient methods to navigate the compressed representation,
and described how the top DAG can be encoded to support efficient navigation
without prior decompression.

We thus conclude that tree compression with top trees is a very promising
compressor for labelled trees, and has several key advantages over other
compressors that make it worth pursuing. It is our belief that its great
flexibility, efficient navigation, high speed, simplicity, and provable bounds
should not be discarded easily. While further careful optimizations are required
to close the compression ratio gap, tree compression with top trees is already a
good and fast compressor with many advantages.

\paragraph*{Future Work} We expect that significant potential for improvement
lies in more sophisticated combiners. The requirements for combiners give us a
lot of space to devise better merging algorithms. Combiners might also be used
to improve locality in the top tree in addition to compression performance,
leading to better navigation performance. Moreover, additional compression
improvements should be achievable with carefully engineered output
representations. Since the vast majority of total running time is currently
spent on the construction of the top DAG, using more advanced encodings may
improve compression without losing speed. One starting point to replace our
relatively naïve representation could be a decomposition of the top DAG into two
spanning trees~\cite{ESP2013}.

\begin{table}[b!]
\caption{Running times in seconds, median over ten iterations}\label{tbl:runtime}
\footnotesize
\centerline{
\begin{tabular}{l rrrrrrr}
\toprule
\textbf{File name}~~ & ~~\textbf{TopTrees} & ~~\textbf{TT+RePair} & ~~\textbf{TreeRePair}
& ~~\textbf{RePair} & ~~\textbf{gzip -9} & ~~\textbf{bzip2} \\\midrule
 1998statistics & 0.00 &  0.01 &   0.05 &  0.04 & 0.00 &  0.13 \\
           dblp & 6.00 & 11.21 &  45.72 & 39.57 & 2.46 & 74.77 \\
enwiki-latest-p & 3.92 &  7.14 &  32.98 & 28.33 & 1.29 & 49.12 \\
       factor12 & 7.16 & 11.54 & 109.47 & 54.19 & 4.48 & 81.86 \\
      factor4.8 & 2.82 &  4.70 &  46.22 & 21.61 & 1.79 & 33.09 \\
        factor4 & 2.40 &  3.92 &  39.47 & 17.75 & 1.49 & 28.21 \\
        factor7 & 4.21 &  6.84 &  67.83 & 31.55 & 2.61 & 48.60 \\
  JST-gene.chr1 & 0.04 &  0.06 &   0.38 &  0.54 & 0.03 &  1.27 \\
   JST-snp.chr1 & 0.24 &  0.38 &   2.12 &  3.40 & 0.17 &  6.33 \\
           nasa & 0.15 &  0.23 &   0.94 &  0.85 & 0.06 &  1.86 \\
 NCBI-gene.chr1 & 0.31 &  0.51 &   2.25 &  3.33 & 0.20 &  7.97 \\
       proteins & 6.92 & 11.88 &  50.17 & 53.27 & 2.41 & 81.92 \\
      SwissProt & 1.13 &  2.13 &  12.35 &  5.74 & 0.50 & 11.15 \\
     treebank-e & 1.35 &  1.99 &  12.70 &  4.00 & 2.80 &  3.62 \\
            uwm & 0.01 &  0.02 &   0.11 &  0.09 & 0.00 &  0.28 \\
           wiki & 0.78 &  1.17 &   5.59 &  4.28 & 0.22 &  9.21 \\
\bottomrule
\end{tabular}
}
\end{table}

\begin{table}[b!]
\caption{Compressed file sizes in Bytes}\label{tbl:size}
\footnotesize
\centerline{
\begin{tabular}{l rrrrrrrr}
\toprule
\textbf{File name}~~ & ~~\textbf{Succinct} & ~~\textbf{TopTrees} & ~~\textbf{TT+RePair} & ~~\textbf{TreeRePair}
& ~~~~\textbf{RePair} & ~~~~\textbf{gzip -9} & ~~~~~~\textbf{bzip2}\\\midrule
 1998statistics &      18\,426 &         788 &         851 &         692 &      1\,327 &      4\,080 &      1\,301 \\
           dblp & 13\,740\,160 & 1\,486\,208 & 1\,416\,538 & 1\,093\,533 & 2\,037\,878 & 2\,476\,347 & 1\,116\,311 \\
enwiki-latest-p &  7\,901\,904 &    516\,638 &    525\,532 &    379\,410 &    866\,161 & 1\,490\,278 &    544\,606 \\
       factor12 & 16\,402\,888 & 2\,069\,437 &    948\,167 &    705\,740 & 3\,092\,194 & 6\,342\,947 & 2\,913\,894 \\
      factor4.8 &  6\,565\,499 & 1\,070\,045 &    784\,519 &    548\,853 & 1\,587\,043 & 2\,542\,773 & 1\,168\,654 \\
        factor4 &  5\,473\,158 &    937\,660 &    704\,105 &    490\,945 & 1\,429\,872 & 2\,119\,269 &    973\,463 \\
        factor7 &  9\,571\,503 & 1\,421\,376 &    855\,063 &    625\,094 & 2\,248\,370 & 3\,702\,132 & 1\,700\,043 \\
  JST-gene.chr1 &      96\,159 &      5\,332 &      5\,523 &      3\,672 &      8\,273 &     33\,316 &      7\,027 \\
   JST-snp.chr1 &     547\,594 &     29\,084 &     27\,039 &     20\,654 &     50\,347 &    194\,862 &     49\,857 \\
           nasa &     341\,161 &     42\,077 &     39\,883 &     29\,310 &     60\,394 &     83\,231 &     34\,404 \\
 NCBI-gene.chr1 &     721\,803 &     17\,880 &     17\,418 &     11\,459 &     29\,912 &    199\,308 &     47\,901 \\
       proteins & 17\,315\,832 &    905\,613 &    860\,366 &    614\,892 & 1\,537\,249 & 3\,214\,663 & 1\,141\,697 \\
      SwissProt &  2\,343\,730 &    598\,960 &    574\,466 &    395\,417 &    699\,757 &    829\,119 &    398\,197 \\
     treebank-e &  2\,396\,061 & 1\,173\,463 & 1\,170\,304 &    830\,324 & 1\,537\,334 & 1\,858\,722 & 1\,032\,303 \\
            uwm &      37\,491 &      2\,177 &      2\,070 &      1\,366 &      3\,101 &      7\,539 &      2\,102 \\
           wiki &  1\,242\,418 &    110\,686 &    102\,371 &     75\,090 &    171\,075 &    247\,898 &     93\,858 \\
\bottomrule
\end{tabular}
}
\end{table}



\bibliographystyle{plain}
\bibliography{diss.bib}
\end{document}